\documentclass[11pt]{article}

\usepackage[utf8]{inputenc}
\usepackage[a4paper,margin=1in]{geometry}
\usepackage{amsmath,amssymb}
\usepackage{graphicx}
\usepackage{booktabs}
\usepackage{tabularx}
\usepackage{array}
\usepackage[section]{placeins}
\usepackage{microtype}
\usepackage[round,authoryear]{natbib}
\usepackage{xcolor}
\usepackage{fancyvrb}
\usepackage{tikz}
\usetikzlibrary{arrows.meta,positioning}
\usepackage[font=small,labelfont=bf]{caption}
\usepackage[colorlinks=true,linkcolor=blue!50!black,citecolor=blue!50!black,urlcolor=blue!50!black]{hyperref}

\newcommand{\pkg}[1]{\textbf{#1}}
\newcommand{\code}[1]{\texttt{#1}}
\DefineVerbatimEnvironment{Rcode}{Verbatim}{fontsize=\small,xleftmargin=1.5em}
\newtheorem{proposition}{Proposition}
\newenvironment{proof}{\par\noindent\textit{Proof.}\ }{\hfill$\square$\par\medskip}

\title{When Scatter Plots Hide the Data:\\Diagnosing Ties and Choosing the Right Display in \textsf{R}}
\author{Sameera Viswakula\\
  Department of Statistics, University of Colombo, Sri Lanka\\
  \href{mailto:sam@stat.cmb.ac.lk}{\texttt{sam@stat.cmb.ac.lk}}}
\date{}

\begin{document}
\maketitle

\begin{abstract}
Scatter plots draw one mark per observation, so observations that share a location are hidden. In a two-item Likert survey of 2,757 respondents, 98.7\% of the data lie under other points and the plot shows only 36 dots. The density-distribution sunflower plot of Dupont and Plummer addresses this by drawing raw points in sparse regions and sunflowers, whose petals count observations, in dense ones. It is not available in \textsf{R}, and its single petal weight for dense cells fails when cell counts span more than an order of magnitude: small dense cells collapse to one petal, and the count a reader decodes can be wrong by 150\% or more. We propose geometric petal tiers, in which each cell uses the smallest weight among $1, b, b^2, \ldots$ observations per petal that keeps its flower within $P$ petals. Every flower then has between about $P/b$ and $P$ petals, and the relative error of every decoded count is below $b/(2P)$, 17.9\% at the defaults $b = 5$ and $P = 14$. We also define tie diagnostics and decision rules that use them to choose between an ordinary scatter plot, jitter or transparency, a classic sunflower plot and geometric tiers. Both are implemented in the \pkg{ggplot2} extension \pkg{ggsunflower}. We illustrate the methods on survey, automotive, historical and large commercial data, and in a simulation over 35 conditions geometric tiers keep the worst-cell error at or below 14\% where a single weight reaches a median of 251\%.
\end{abstract}

\noindent\textbf{Software:} \url{https://github.com/sameeranc/ggsunflower}\\[2pt]
\noindent\textbf{Keywords:} overplotting; sunflower plot; scatter plot; ties; statistical graphics; \pkg{ggplot2}.

\section{Introduction}\label{sec:intro}

The scatter plot is the default display for two numeric variables, and it rests on an assumption that is rarely checked: that each observation gets its own visible mark. When observations share a location, the marks are drawn on top of each other and the plot shows the \emph{support} of the data rather than its \emph{distribution}. Ties are common. Likert items, counts, ages in years, rounded measurements and large data sets recorded on a finite grid all produce them.

Figure~\ref{fig:bfi} shows the problem at its most extreme. Two agreeableness items from the Big Five inventory (\code{bfi} in \pkg{psych}; \citealp{psych}) are answered on a 1--6 scale by 2,757 respondents. The ordinary scatter plot (panel~a) shows a complete $6 \times 6$ grid of 36 identical dots: 98.7\% of the respondents are invisible, and the plot cannot distinguish a cell holding 3 respondents from one holding 483. \citet{galton1886} met the same problem with heights recorded to the nearest inch and published his data as a table of counts rather than a scatter plot.

\begin{figure}[!htbp]
  \centering
  \includegraphics[width=\textwidth]{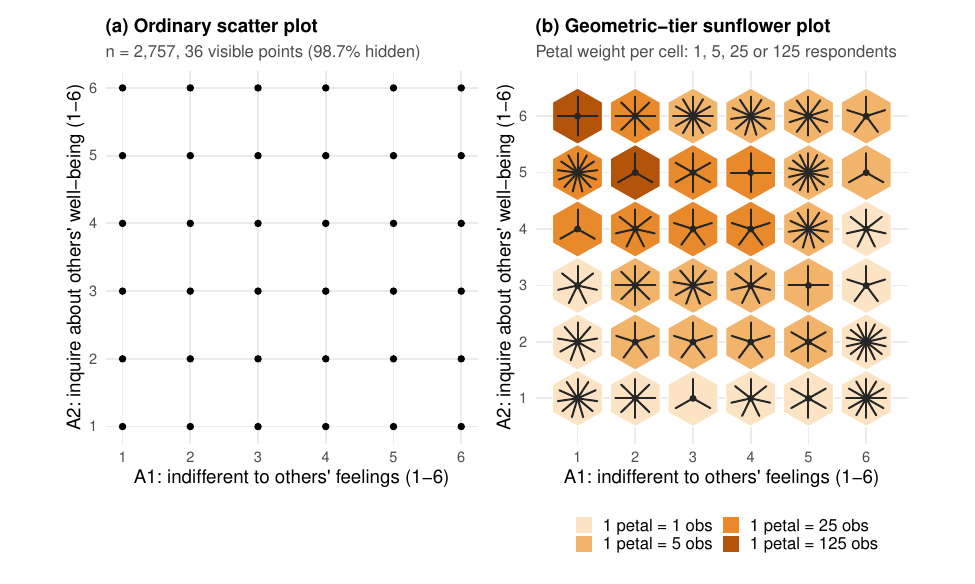}
  \caption{Two Likert items from \code{psych::bfi}. (a) The ordinary scatter plot shows 36 dots for 2,757 respondents. (b) The geometric-tier sunflower plot drawn by \code{geom\_sunflower()} shows where the respondents are, and each count can be read as petals $\times$ weight.}
  \label{fig:bfi}
\end{figure}

Many remedies exist, and each works in some regimes and fails in others (Table~\ref{tab:remedies}). Jitter separates a few coincident points but distorts discrete structure and still overlaps when repeats are many. Transparency accumulates darkness but saturates and cannot be read as a count. Point area (\code{geom\_count()} in \pkg{ggplot2}; \citealp{wickham2016}) is hard to compare precisely \citep{cleveland1984perception}. Hexagonal binning \citep{carr1987} scales to large data but absorbs sparse tails into bins and encodes counts by colour, which readers decode imprecisely.

\begin{table}[!htbp]
  \centering\small
  \caption{Common remedies for repeated observations.}
  \label{tab:remedies}
  \begin{tabularx}{\textwidth}{@{}>{\raggedright}p{3.1cm} >{\raggedright}p{2.9cm} >{\raggedright}X >{\raggedright\arraybackslash}X@{}}
    \toprule
    Remedy & Encoding of repeats & Works when & Fails when \\
    \midrule
    Jitter & random displacement & few repeats per location & discrete structure is distorted; many repeats still overlap \\
    Transparency & accumulated darkness & moderate $n$, continuous data & darkness saturates; counts cannot be read \\
    \code{geom\_count()} & point area & few distinct locations & areas are hard to compare; no exact counts \\
    Hexagonal binning & colour of a cell & large $n$, continuous data & sparse tails vanish into bins; colour is read imprecisely \\
    Sunflower plot & one petal per observation & at most about 14 repeats & petals merge into blobs \\
    Density-distribution sunflower & raw points, light and dark flowers & counts within about one order of magnitude & small dense cells collapse to one petal (Section~\ref{sec:fail}) \\
    \bottomrule
  \end{tabularx}
\end{table}

The sunflower plot \citep{cleveland1984faces} encodes counts by petals, which readers can count, and the density-distribution sunflower plot of \citet{dupont2003,dupont2005} extends it to dense data by giving the petals of dense cells a weight of several observations. This family is the only one in Table~\ref{tab:remedies} that lets a reader recover counts, but it has two gaps. It is available in Stata but not in \textsf{R} \citep{rcore}: base \textsf{R}'s \code{sunflowerplot()} draws one petal per observation, and no \pkg{ggplot2} extension draws petals. And its single petal weight breaks down when counts span more than an order of magnitude, which is exactly when ties matter most.

This paper makes four contributions:
\begin{enumerate}
  \item \textbf{Geometric petal tiers} (Section~\ref{sec:tiers}), a petal rule under which every flower has between about $P/b$ and $P$ petals and every decoded count is within $b/(2P)$ of the truth.
  \item \textbf{Tie diagnostics} (Section~\ref{sec:select}), computed from one frequency table, that measure how much a scatter plot hides.
  \item \textbf{Decision rules} (Section~\ref{sec:select}) that use the diagnostics to choose a display automatically and explain the choice.
  \item \textbf{The \textsf{R} package \pkg{ggsunflower}} (Section~\ref{sec:package}), which implements both as \pkg{ggplot2} layers and accepts pre-aggregated counts from databases and Spark.
\end{enumerate}

\section{Sunflower plots and where they fail}\label{sec:fail}

A sunflower plot replaces the $k$ observations at one location by a dot with $k$ petals, short line segments at equal angles. Petals are counted rather than estimated, which makes them more precise than area or colour, but only while they can be told apart: beyond about 14 petals a flower becomes a solid disc.

\paragraph{The density-distribution rule.} \citet{dupont2003} partition the plane into hexagonal cells and draw each cell in one of three ways according to its count $k$: raw points when $k < L$; a \emph{light} sunflower with $k$ petals, one per observation, when $L \le k \le U$; and a \emph{dark} sunflower with $\operatorname{round}(k/w)$ petals, each worth $w$ observations, when $k > U$. We use $L = 3$ and $U = 13$ throughout. The weight $w$ is a single number for the whole plot. To keep the densest cell readable it must be at least $m/P$, where $m$ is the largest cell count and $P$ the maximum number of petals; we use $w = \lceil m/P \rceil$ with $P = 14$.

\paragraph{Why one weight fails.} The rule works while dark counts stay within about one order of magnitude, so that every dark cell gets a reasonable number of petals. When $m$ is large, $w$ is large, and a cell just above $U$ gets $\operatorname{round}(k/w)$ petals, often zero rounded up to one. In \code{bfi}, $m = 483$ gives $w = 35$. Every dark cell with 14 to 52 respondents is then drawn with a single petal and decoded as 35, and a cell of 14 is overstated by 150\%. Twelve of the 26 dark cells in \code{bfi} have one petal, so much of the dark region carries almost no information (Figure~\ref{fig:bfi4}, panel~c). The fault is not the tiers but the single weight: no one $w$ can serve cells of 14 and of 483.

\begin{figure}[!htbp]
  \centering
  \includegraphics[width=\textwidth]{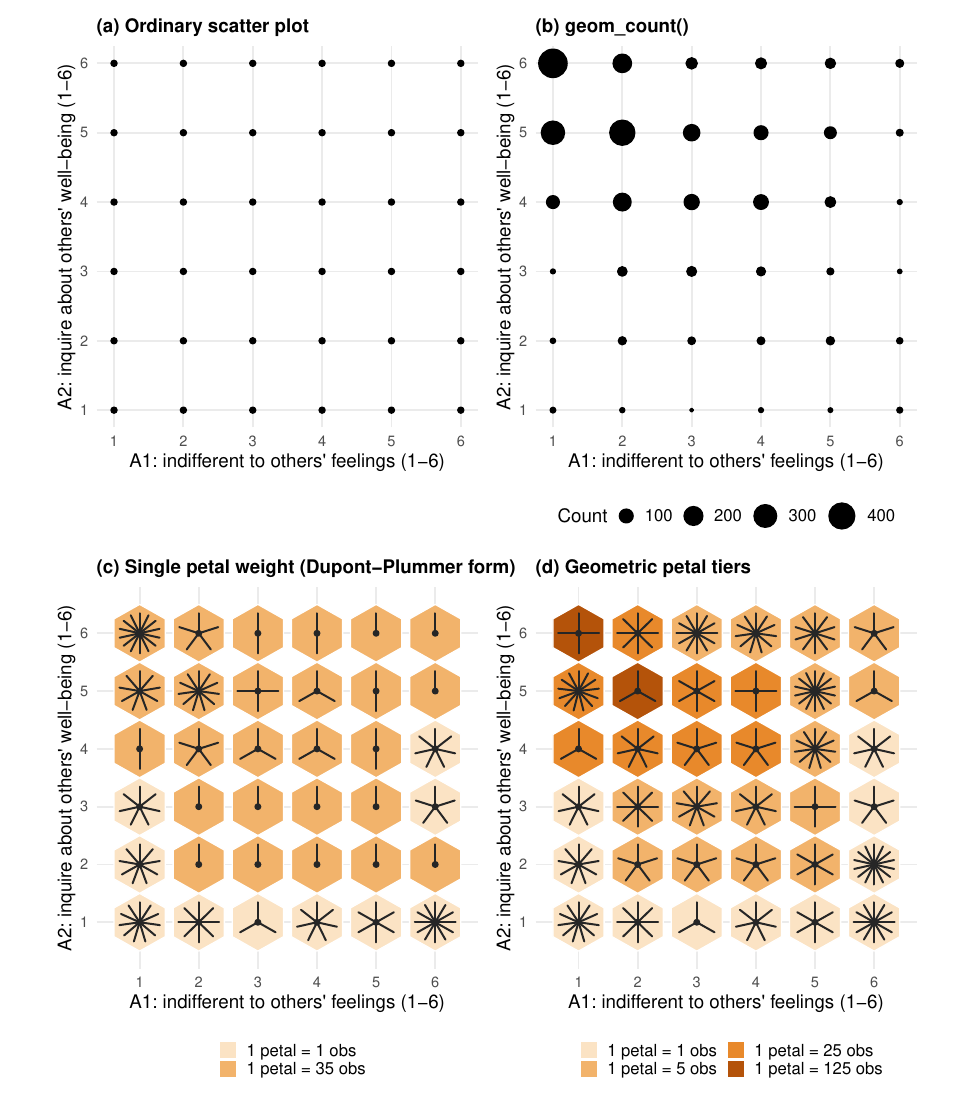}
  \caption{\code{psych::bfi} drawn four ways. (a) Ordinary scatter plot. (b) \code{geom\_count()}. (c) Density-distribution sunflower plot with one dark weight ($w = 35$): 12 of the 26 dark cells collapse to one petal. (d) Geometric petal tiers (Section~\ref{sec:tiers}).}
  \label{fig:bfi4}
\end{figure}

The problem grows with the data. In the simulation of Section~\ref{sec:sim}, the worst-cell error of the single-weight rule rises from 26\% at $n = 1{,}000$ to several thousand per cent at $n = 100{,}000$, because $m$ grows with $n$ while the smallest dark count stays at $U + 1$.

\section{Geometric petal tiers}\label{sec:tiers}

Instead of one weight for every dense cell, each cell chooses its own weight from a geometric ladder, the smallest that keeps its flower within $P$ petals.

\paragraph{Cells.} For discrete data each distinct location is a cell (\emph{exact mode}). Otherwise observations are assigned to a hexagonal lattice, built as two offset rectangular lattices in standardised coordinates, each point going to the nearer centre (\emph{binned mode}). Cells with fewer than $L$ observations show their raw points, as in \citet{dupont2003}.

\paragraph{The rule.} Fix a base $b \ge 2$ and a petal limit $P$. A cell with $k \ge L$ observations is placed in tier
\begin{equation}\label{eq:tier}
  t(k) = \max\bigl\{0,\ \lceil \log_b (k/P) \rceil\bigr\},
\end{equation}
and drawn with weight $w(k) = b^{t(k)}$ observations per petal and $p(k) = \operatorname{round}\bigl(k / w(k)\bigr)$ petals. Tier 0 is the classic sunflower, one petal per observation, used whenever $k \le P$. The hexagon behind the flower is filled with a sequential colour that darkens with the tier, and the legend states the weight of each tier present (``1 petal = 25 obs'').

\begin{proposition}\label{prop:bounds}
Let $\hat k = p(k)\, w(k)$ be the count a reader decodes. For every $k \ge L$:
\begin{enumerate}
  \item[(a)] $p(k) \le P$, and $p(k) \ge \operatorname{round}(P/b)$ whenever $t(k) \ge 1$;
  \item[(b)] $\hat k = k$ if $t(k) = 0$, and $|\hat k - k| / k < b/(2P)$ otherwise;
  \item[(c)] $t(k)$ is non-decreasing in $k$, and $p(k)$ is non-decreasing in $k$ within a tier;
  \item[(d)] if $m \le P$ only tier 0 occurs, and if $P < m \le Pb$ only tiers 0 and 1 occur.
\end{enumerate}
\end{proposition}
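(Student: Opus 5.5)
The whole proposition follows from one interval characterisation of the tier, so I would establish that first and then read off (a)--(d). For $t(k) \ge 1$, the definition \eqref{eq:tier} gives $t = \lceil \log_b(k/P)\rceil \ge 1$. This is equivalent to $t-1 < \log_b(k/P) \le t$, that is,
\[
  P\,b^{t-1} < k \le P\,b^{t}.
\]
Dividing by $w(k)=b^{t}$ gives the key bracket
\[
  P/b < k/w(k) \le P .
\]
For $t(k)=0$ we have $\lceil \log_b(k/P)\rceil \le 0$, i.e.\ $k \le P$, and $w=1$. I would state this as a preliminary lemma. The only facts about $\operatorname{round}$ I need are:
\begin{itemize}
  \item it is non-decreasing, which holds for any fixed tie-breaking convention, including round-half-even;
  \item it fixes integers;
  \item it satisfies $|\operatorname{round}(x)-x|\le 1/2$.
\end{itemize}

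For (a), in tier 0 we have $p(k)=k\le P$. In tier $t\ge1$, monotonicity of rounding applied to the bracket gives $p(k)=\operatorname{round}(k/w)\le \operatorname{round}(P)=P$, since $P$ is an integer. The same monotonicity gives $p(k)\ge \operatorname{round}(P/b)$. Only the non-strict inequality survives the rounding, which is why (a) is stated with $\ge$.

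For (b), tier 0 gives $w=1$, so $\hat k = k$ because $k$ is an integer. Otherwise
\[
  |\hat k - k| = w\,\bigl|\operatorname{round}(k/w)-k/w\bigr| \le w/2 ,
\]
so $|\hat k-k|/k \le b^t/(2k)$. The strict lower bound $k > P b^{t-1}$ from the lemma then gives
\[
  |\hat k-k|/k < b^t/(2Pb^{t-1}) = b/(2P).
\]
The strictness of this inequality comes entirely from the strict side of the ceiling characterisation, so I would be careful to keep it strict there.

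For (c), $k\mapsto \log_b(k/P)$ is increasing, and the ceiling and $\max\{0,\cdot\}$ are both non-decreasing, so their composition $t(k)$ is non-decreasing. Within a fixed tier $w$ is constant, so $k/w$ is increasing and $p=\operatorname{round}(k/w)$ is non-decreasing. For (d), every cell has $k\le m$. If $m\le P$, then $\log_b(k/P)\le 0$ and so $t=0$. If $m\le Pb$, then $\log_b(k/P)\le 1$ and so $t\le 1$.

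I expect no step to be hard. The main point of care is the boundary behaviour: the exact powers $k=Pb^{t}$, where the lemma places $k$ in tier $t$ with exactly $P$ petals, and the dependence of (a) on the rounding convention. I would handle both by proving the interval lemma with the ceiling's exact inequalities rather than via logarithm manipulations that blur strict and weak inequalities.
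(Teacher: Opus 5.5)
Your proof is correct and follows essentially the same route as the paper. Both arguments use the ceiling characterisation $Pb^{t-1} < k \le Pb^{t}$, hence $P/b < k/w \le P$. Monotonicity of rounding then gives (a) and (c), the bound $|\hat k - k| \le w/2$ together with the strict lower bound on $k$ gives (b), and (d) follows directly from the definition; you are somewhat more explicit than the paper about the tier-0 cases of (a) and (b) and about why any nearest-integer rounding convention is monotone.
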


\begin{proof}
For $t = t(k) \ge 1$, the definition gives $b^{t-1} < k/P \le b^{t}$, hence $P/b < k/w \le P$ with $w = b^t$. Rounding is monotone, so $\operatorname{round}(P/b) \le p(k) \le P$, which is (a). Since $|\hat k - k| = w\,|\operatorname{round}(k/w) - k/w| \le w/2$ and $k > Pw/b$, the relative error is below $(w/2)/(Pw/b) = b/(2P)$, which is (b). Both $\lceil \log_b(k/P)\rceil$ and $\operatorname{round}(k/w)$ for fixed $w$ are non-decreasing in $k$, which is (c), and (d) follows from \eqref{eq:tier}.
\end{proof}

With $b = 5$ and $P = 14$, every flower above tier 0 has 3 to 14 petals, so none collapses to one, and every decoded count is within 17.9\% of the truth. The single-weight rule has no such bound: its error for a cell just above $U$ grows in proportion to $m$. Part (c) means that a darker hexagon always holds more observations than a lighter one, and within one shade more petals mean more observations. Part (d) means that geometric tiers \emph{generalise} the existing rules: they reduce to the classic sunflower plot when $m \le P$ and to a two-tier density-distribution plot with dark weight $b$ when $m \le Pb$, and differ from \citet{dupont2003} only when $m > Pb$, which is where the single weight fails.

\paragraph{Choosing $b$ and $P$.} $P = 14$ sits just above the light-flower limit $U = 13$, so tier 0 covers the same counts as a light flower. The base trades precision against the number of tiers: the error bound $b/(2P)$ favours a small $b$, while the number of shades a reader must tell apart, $\lceil \log_b(m/P) \rceil + 1$, favours a large one. We use $b = 5$, which gives $1, 5, 25, 125, 625, \ldots$ observations per petal, easy multiples to decode, and at most four tiers up to $m = 1{,}750$. For very large cells $b = 10$ keeps the number of tiers small at the cost of a 36\% bound.

\begin{figure}[!htbp]
  \centering
  \includegraphics[width=\textwidth]{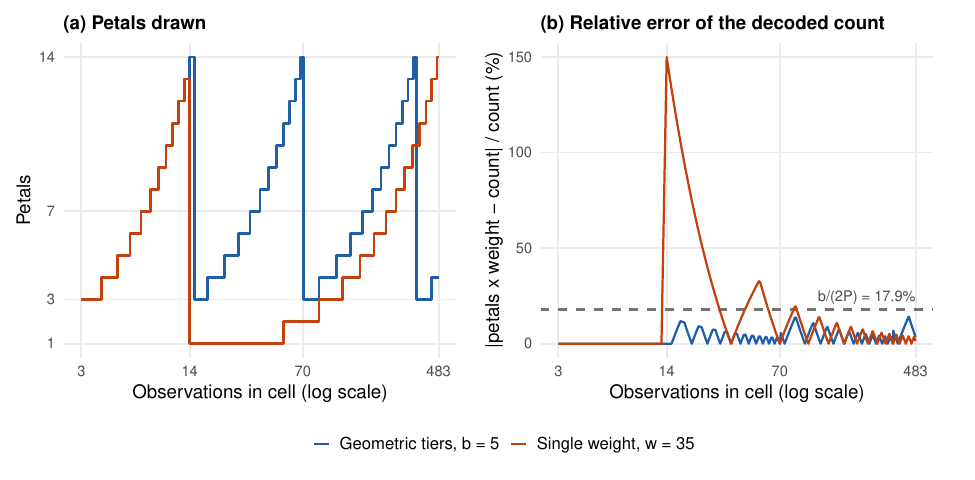}
  \caption{Petals drawn (a) and relative error of the decoded count (b) for every cell count from 3 to 483, the range in \code{psych::bfi}. The single weight $w = 35$ draws one petal for all counts from 14 to 52 and overstates a count of 14 by 150\%; geometric tiers keep 3 to 14 petals and stay below $b/(2P)$.}
  \label{fig:petals}
\end{figure}

\paragraph{Rounding diagnostic.} Rounding loses information. The total absolute rounding error, $\sum |\hat k - k|$ over flower cells, measures how far the glyphs approximate the data. In \code{bfi} it is 91 of 2,757 respondents (3.3\%), against 259 (9.4\%) for the single weight.

\section{Tie diagnostics and automatic selection}\label{sec:select}

Which display is right depends on properties of the data that the scatter plot itself hides, so we measure them first. All diagnostics come from the frequency table of distinct $(x, y)$ pairs. Let $n$ be the number of complete observations, $D$ the number of distinct locations and $k_1, \ldots, k_D$ their counts.

\begin{itemize}
  \item \emph{Hidden fraction} $h = 1 - D/n$, the share of observations drawn on top of another.
  \item \emph{Maximum repeat} $\max_j k_j$, the largest number of observations at one location.
  \item \emph{Distinct locations} $D$, which decides whether every location can carry its own glyph. Above $D = 400$ the glyphs become too small and the data are binned.
  \item \emph{Discreteness}: both variables take at most 30 distinct values. Jitter is acceptable for discrete data and distorting for continuous data.
  \item \emph{Densest cell} $m$, the largest cell count once the cells are chosen (equal to the maximum repeat in exact mode), and the number of tiers it needs, $\lceil \log_b(m/P) \rceil + 1$.
\end{itemize}

The diagnostics cost one \code{count(x, y)}. For data held in a database or in Spark the aggregation runs where the data live, and only the frequency table is collected.

Four ordered checks route a data set to its display (Figure~\ref{fig:flow}). The order reflects cost to the reader: an ordinary scatter plot whenever it hides almost nothing, a small displacement when ties are mild, and petals only when counts matter. Because geometric tiers reduce to the classic and the Dupont--Plummer forms when $m$ is small (Proposition~\ref{prop:bounds}d), one petal rule covers every heavily tied case. The thresholds (5\% hidden, 2 and 5 repeats, 30\% hidden, 400 cells, 30 levels) are defaults, not estimates; Section~\ref{sec:discussion} discusses calibrating them. Every choice is returned with a one-line reason, such as ``densest cell has 483 obs: 4 tiers of 1, 5, \ldots\ obs per petal'', so the analyst sees why a display was chosen and can override it.

\begin{figure}[!htbp]
  \centering
  \begin{tikzpicture}[
      font=\small, node distance=5mm and 14mm,
      box/.style={draw=black!45, rounded corners=3pt, minimum width=56mm, minimum height=8mm, align=center},
      stepbox/.style={box, fill=black!6},
      outbox/.style={box, minimum width=52mm},
      final/.style={box, draw=blue!60!black, very thick, fill=blue!6, font=\small\bfseries},
      arr/.style={-{Stealth[length=2mm]}, black!55},
      lab/.style={font=\footnotesize, text=black!60}]
    \node[stepbox] (start) {\code{count(x, y)}: compute diagnostics};
    \node[box, below=of start] (q1) {Hidden $< 5\%$ and max repeat $\le 2$?};
    \node[outbox, right=of q1] (o1) {Ordinary scatter (alpha if $n > 5{,}000$)};
    \node[box, below=of q1] (q2) {Max repeat $\le 5$ and hidden $< 30\%$?};
    \node[outbox, right=of q2] (o2) {Jitter if discrete, else alpha};
    \node[stepbox, below=of q2] (cells) {Cells: exact if $D \le 400$, else hexagons};
    \node[box, below=of cells] (q3) {Densest cell $m \le P = 14$?};
    \node[outbox, right=of q3] (o3) {Classic sunflower: 1 petal = 1 obs};
    \node[final, below=of q3] (geo) {Geometric-tier sunflower};
    \node[lab, right=4mm of geo, align=left] {two tiers when $m \le 70$:\\the Dupont--Plummer form};
    \draw[arr] (start) -- (q1);
    \draw[arr] (q1) -- node[lab, right] {no} (q2);
    \draw[arr] (q2) -- node[lab, right] {no} (cells);
    \draw[arr] (cells) -- (q3);
    \draw[arr] (q3) -- node[lab, right] {no} (geo);
    \draw[arr] (q1) -- node[lab, above] {yes} (o1);
    \draw[arr] (q2) -- node[lab, above] {yes} (o2);
    \draw[arr] (q3) -- node[lab, above] {yes} (o3);
  \end{tikzpicture}
  \caption{Decision rules of \code{recommend\_display()} and \code{smart\_scatter()}: four ordered checks route each data set to its display.}
  \label{fig:flow}
\end{figure}

\section{The ggsunflower package}\label{sec:package}

\pkg{ggsunflower} exposes the method as ordinary \pkg{ggplot2} layers, so sunflower plots combine with facets, scales, themes and other layers in the usual way. Table~\ref{tab:api} lists the user-facing functions.

\begin{table}[!htbp]
  \centering\small
  \caption{User-facing functions of \pkg{ggsunflower} 0.1.0.}
  \label{tab:api}
  \begin{tabularx}{\textwidth}{@{}l >{\raggedright\arraybackslash}X@{}}
    \toprule
    Function & Purpose \\
    \midrule
    \code{geom\_sunflower()}, \code{stat\_sunflower()} & sunflower layer; exact cells or hexagonal bins; \code{weight} aesthetic for counts \\
    \code{scale\_fill\_sunflower()} & sequential fill for petal weights, legend ordered by weight \\
    \code{sunflower\_petals()} & the petal rules (\code{"geometric"}, \code{"single"}, \code{"classic"}) applied to a vector of counts \\
    \code{tie\_diagnose()}, \code{recommend\_display()} & tie diagnostics and the display they recommend \\
    \code{smart\_scatter()} & diagnose, choose, draw and report the reason \\
    \code{sunflower\_error()} & rounding error of the sunflowers in a plot \\
    \bottomrule
  \end{tabularx}
\end{table}

\paragraph{Layers.} \code{stat\_sunflower()} assigns observations to cells, applies the petal rule and returns, for each cell, its count, tier, petal weight, number of petals and decoded count; cells below $L$ return their raw points. \code{geom\_sunflower()} draws the hexagons, petals and raw points with \pkg{grid}. The main arguments are \code{cells} (\code{"auto"}, \code{"exact"} or \code{"hex"}), \code{bins}, \code{rule}, \code{base}, \code{max\_petals} ($P$), \code{min\_flower} ($L$) and, for the single-weight rule, \code{light\_max} ($U$) and \code{petal\_weight} ($w$). In exact mode each cell is centred on its own location and the glyph size is set by the most common spacing between distinct values, so open-ended classes coded at half spacing (as in Galton's data) do not shrink every glyph.
\begin{Rcode}
library(ggplot2)
library(ggsunflower)
bfi <- na.omit(psych::bfi[, c("A1", "A2")])
ggplot(bfi, aes(A1, A2)) + geom_sunflower()                  # Figure 1b
ggplot(bfi, aes(A1, A2)) + geom_sunflower(rule = "single")   # Figure 2c
\end{Rcode}

\paragraph{Petal rules.} \code{sunflower\_petals()} applies a rule to a vector of counts, which makes the rules easy to inspect and to use in simulation:
\begin{Rcode}
sunflower_petals(c(14, 52, 483), rule = "single")[, c("count", "weight", "petals")]
#>   count weight petals
#> 1    14     35      1
#> 2    52     35      1
#> 3   483     35     14
sunflower_petals(c(14, 52, 483))[, c("count", "weight", "petals")]
#>   count weight petals
#> 1    14      1     14
#> 2    52      5     10
#> 3   483    125      4
\end{Rcode}

\paragraph{Diagnostics and selection.} \code{tie\_diagnose()} returns the quantities of Section~\ref{sec:select} with the recommended display, and \code{smart\_scatter()} draws it:
\begin{Rcode}
tie_diagnose(bfi$A1, bfi$A2)
#> <tie_diagnosis>
#>   observations   2,757
#>   distinct (D)   36
#>   hidden         98.7%
#>   max repeat     483
#>   discrete       yes
#>   cells          exact, densest has 483 obs (4 tiers)
#>   recommended    geometric
#>                  densest cell has 483 obs: 4 tiers of 1, 5, ... obs per petal
p <- smart_scatter(bfi, A1, A2)
#> geometric-tier sunflower plot: densest cell has 483 obs: 4 tiers of 1, 5, ...
sunflower_error(p)[, c("observations", "abs_error", "rel_error", "max_cell_error")]
#>   observations abs_error  rel_error max_cell_error
#> 1         2757        91 0.03300689      0.1176471
\end{Rcode}

\paragraph{Pre-aggregated and large data.} \code{stat\_sunflower()} accepts a \code{weight} aesthetic, so a frequency table can be plotted directly and gives the same plot as the raw data. Data in a database (\pkg{dbplyr}), in Spark (\pkg{sparklyr}) or in Arrow files can be counted where they live and only the table collected:
\begin{Rcode}
counts <- dplyr::count(dplyr::tbl(con, "survey"), A1, A2) |> dplyr::collect()
ggplot(counts, aes(A1, A2, weight = n)) + geom_sunflower()
\end{Rcode}

\paragraph{Related software.} Base \textsf{R}'s \code{sunflowerplot()} draws one petal per observation. Stata's \code{sunflower} command implements the density-distribution rule \citep{dupont2005}. \code{geom\_hex()} and the \pkg{hexbin} package \citep{carr1987} bin without petals. The \pkg{vayr} package \citep{vayr} provides \code{position\_sunflower()}, which arranges overplotted points in a sunflower-seed spiral; it moves points rather than counting them, and so complements this work.

\section{Examples}\label{sec:examples}

On eight data sets the rules choose three different displays, and each choice is the one a careful analyst would make by hand (Table~\ref{tab:diag}).

\begin{table}[!htbp]
  \centering\footnotesize\setlength{\tabcolsep}{4pt}
  \caption{Diagnostics and the display chosen by \code{recommend\_display()} with default settings. Cells: exact locations or hexagons (15 bins).}
  \label{tab:diag}
  \begin{tabular}{@{}l r r r r l r r l@{}}
    \toprule
    Data (x vs y) & $n$ & $D$ & Hidden & Max rep. & Cells & $m$ & Tiers & Display \\
    \midrule
    \code{faithful}: waiting, eruptions & 272 & 256 & 5.9\% & 2 & exact & 2 & 1 & alpha \\
    \code{MASS::geyser}: waiting, duration & 299 & 257 & 14.0\% & 7 & exact & 7 & 1 & classic \\
    \code{mpg}: city, highway mpg & 234 & 78 & 66.7\% & 14 & exact & 14 & 1 & classic \\
    \code{galton}: parent, child & 928 & 102 & 89.0\% & 48 & exact & 48 & 2 & geometric \\
    \code{bfi}: A1, A2 & 2,757 & 36 & 98.7\% & 483 & exact & 483 & 4 & geometric \\
    Simulated: rounded normal & 50,000 & 167 & 99.7\% & 2,596 & exact & 2,596 & 5 & geometric \\
    \code{diamonds}: carat, price & 53,940 & 28,988 & 46.3\% & 121 & hex & 10,664 & 6 & geometric \\
    \code{diamonds}: carat, depth & 53,940 & 8,716 & 83.8\% & 131 & hex & 14,779 & 6 & geometric \\
    \bottomrule
  \end{tabular}
\end{table}

\paragraph{Negative control.} The Old Faithful waiting and eruption times (\code{faithful}; \citealp{azzalini1990}) have almost no ties, at most two observations at a point, so the rules keep a scatter plot and add only transparency. A tie-aware method should do no harm where it is not needed.

\paragraph{Small discrete data.} No combination of city and highway fuel economy in \code{mpg} \citep{wickham2016} is shared by more than 14 of the 234 cars, so every flower is in tier 0 and the plot is a classic sunflower plot: 33 flowers in which each petal is one car, and 45 sparse locations shown as points (Figure~\ref{fig:mpg}). Here \code{geom\_count()} conveys the same pattern, but only the petals give exact counts.

\begin{figure}[!htbp]
  \centering
  \includegraphics[width=\textwidth]{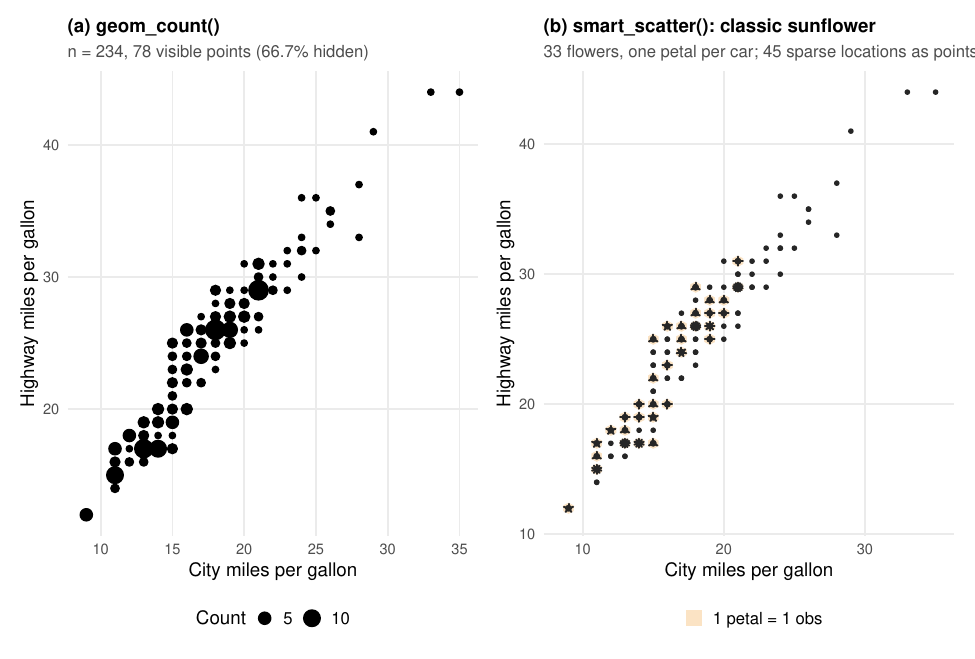}
  \caption{\code{ggplot2::mpg}. (a) \code{geom\_count()}. (b) The classic sunflower plot chosen by \code{smart\_scatter()}: one petal per car.}
  \label{fig:mpg}
\end{figure}

\paragraph{Survey data.} In \code{bfi}, with counts from 3 to 483 at 36 locations, four tiers are needed (Figures~\ref{fig:bfi} and~\ref{fig:bfi4}). The geometric-tier plot shows at once that respondents concentrate where A1 is low and A2 is high, as expected because A1 is reverse-keyed, and each count can be read off: four petals of 125 at $(1, 6)$, close to the true 483.

\paragraph{Historical data.} Galton's 928 parent--child pairs (\code{galton} in \pkg{psychTools}; \citealp{psychtools}), recorded in one-inch classes, fall on 102 locations; 89\% of the children are hidden by the ordinary scatter plot (Figure~\ref{fig:galton}). The densest cell holds 48 children, so two tiers suffice, and the 30 sparsest locations (41 children) keep their raw points. The plot shows what Galton's own table of counts showed and the ordinary scatter plot hides: the elliptical concentration of the data around the regression line.

\begin{figure}[!htbp]
  \centering
  \includegraphics[width=\textwidth]{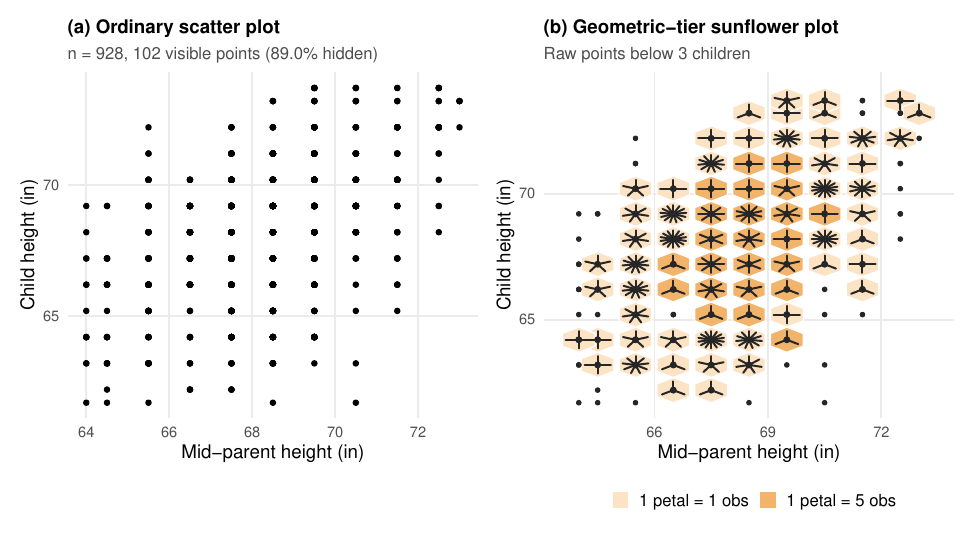}
  \caption{Galton's parent--child heights. (a) Ordinary scatter plot: 102 visible points for 928 children. (b) Geometric-tier sunflower plot with two tiers.}
  \label{fig:galton}
\end{figure}

\paragraph{Large data.} Carat and depth of 53,664 diamonds (\code{diamonds} in \pkg{ggplot2}, restricted to carat $\le 2.5$ and depth 56--68\%) fall on 8,457 distinct locations, too many for a glyph each, so the data are binned into hexagons (Figure~\ref{fig:diamonds}). The densest hexagon holds 4,993 diamonds and five tiers are used, from 1 to 625 diamonds per petal, while 40 diamonds in sparse hexagons keep their raw points. Hexagonal binning with colour (panel~b) gives the quickest impression of where the mass lies; the sunflower plot adds readable counts and keeps the sparse tail as points instead of absorbing it into faint bins. We regard the two as complementary for large continuous data.

\begin{figure}[!htbp]
  \centering
  \includegraphics[width=\textwidth]{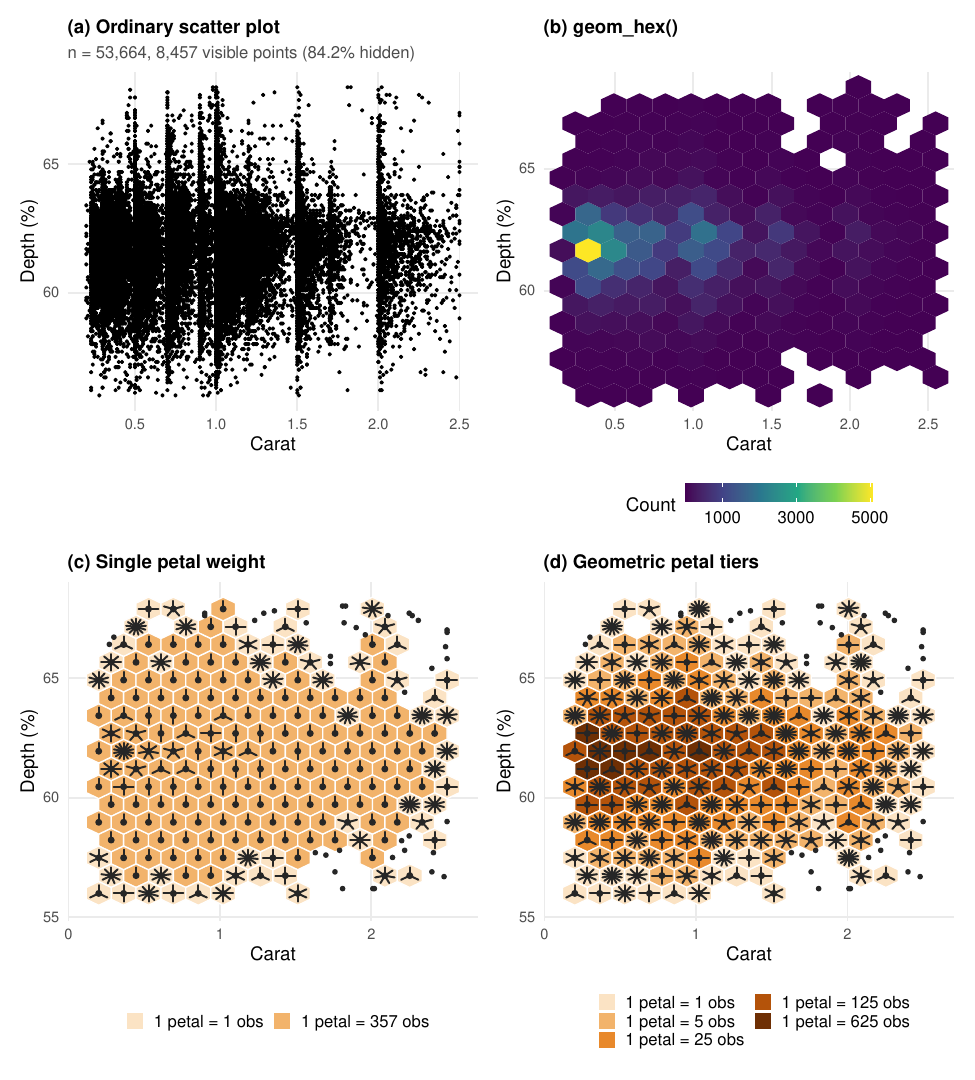}
  \caption{\code{ggplot2::diamonds}, carat against depth. (a) Ordinary scatter plot. (b) \code{geom\_hex()}. (c) Single dark weight: most dense hexagons collapse to one or two petals. (d) Geometric petal tiers.}
  \label{fig:diamonds}
\end{figure}

Across the three examples drawn with geometric tiers the total rounding error is between 3\% and 4\% of the observations: 91 of 2,757 in \code{bfi}, 29 of 928 for Galton and 2,082 of 53,664 diamonds.

\section{Simulation study}\label{sec:sim}

Across 35 simulated conditions, geometric tiers kept every decoded count within 14\% of the truth, while the single-weight rule's worst cell was off by a median of 251\%, and its error grew with the size of the data.

\paragraph{Design.} Pairs $(x, y)$ were drawn from a bivariate normal distribution or a bivariate $t$ distribution with three degrees of freedom (heavier tails, so more skewed cell counts), with correlation $\rho \in \{0, 0.6\}$. Each variable was discretised into $G \in \{6, 12, 24\}$ equal-width classes over its central 99.5\%, giving a $G \times G$ grid of exact cells, and the sample size was $n \in \{10^3, 10^4, 10^5\}$. Each of the 36 conditions was replicated 100 times. In one condition ($n = 1{,}000$, $G = 24$, $\rho = 0$, normal) the densest cell averaged fewer than 14 observations, so the two rules rarely differed; it is omitted, leaving 35. Both rules used $L = 3$ and $P = 14$; the single-weight rule used $U = 13$ and $w = \lceil m/14 \rceil$, and geometric tiers used $b = 5$.

\paragraph{Measures.} For each plot, over the cells drawn as flowers: the \emph{worst-cell relative error} $\max |\hat k - k|/k$; the share of flowers \emph{misread}, decoded more than 50\% away from their count; the share \emph{collapsed} to one or two petals; and \emph{resolution}, the number of distinct decoded counts divided by the number of distinct true counts (1 means that every different count looks different).

\begin{table}[!htbp]
  \centering\small
  \caption{Simulation results by sample size: means over conditions (35 in all) of per-plot measures averaged over 100 replicates. SW = single weight (Dupont--Plummer form); GT = geometric tiers.}
  \label{tab:sim}
  \begin{tabular}{@{}r r r rr rr rr rr@{}}
    \toprule
    & & Mean & \multicolumn{2}{c}{Worst-cell error} & \multicolumn{2}{c}{Misread} & \multicolumn{2}{c}{Collapsed} & \multicolumn{2}{c}{Resolution} \\
    \cmidrule(lr){4-5}\cmidrule(lr){6-7}\cmidrule(lr){8-9}\cmidrule(l){10-11}
    $n$ & Cond. & densest & SW & GT & SW & GT & SW & GT & SW & GT \\
    \midrule
    1,000   & 11 & 147    & 26\%    & 10\% & 0.8\% & 0\% & 11\% & 0\% & 0.75 & 0.81 \\
    10,000  & 12 & 1,240  & 501\%   & 13\% & 19\%  & 0\% & 38\% & 0\% & 0.33 & 0.54 \\
    100,000 & 12 & 11,237 & 3,767\% & 13\% & 45\%  & 0\% & 65\% & 0\% & 0.14 & 0.37 \\
    \bottomrule
  \end{tabular}
\end{table}

Under geometric tiers the mean worst-cell error was at most 14.0\% in every condition, and by Proposition~\ref{prop:bounds} no plot can exceed 17.9\%; no flower was misread or collapsed (Table~\ref{tab:sim}). The single-weight rule misread 22\% of flowers and collapsed 39\% on average, and both shares grow with $n$ because the weight grows with the densest cell while the smallest dark cell stays at 14 observations (Figure~\ref{fig:sim}). Its worst condition, bivariate $t$ with $n = 100{,}000$, $\rho = 0.6$ and $G = 6$, overstated a cell by 14,874\% on average. Both rules lose resolution as $n$ grows, because a flower can show only about a dozen values per tier, but geometric tiers keep 1.1 to 2.6 times as many counts distinguishable. For small $m$ the two rules coincide or nearly so (Proposition~\ref{prop:bounds}d), which is why their points overlap at the left of Figure~\ref{fig:sim}.

\begin{figure}[!htbp]
  \centering
  \includegraphics[width=0.85\textwidth]{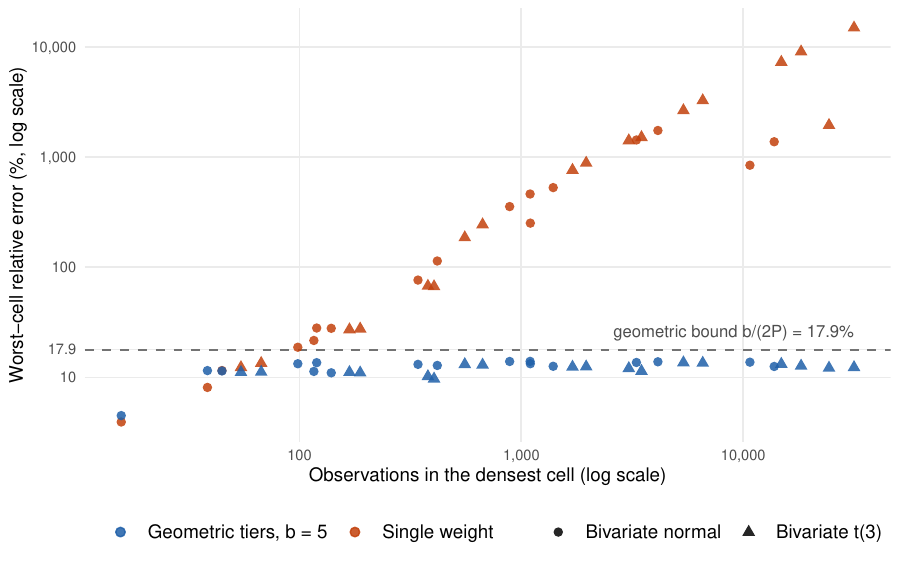}
  \caption{Worst-cell relative error against the densest cell, one point per condition (mean over 100 replicates). The single-weight error grows with $m$; geometric tiers stay below $b/(2P) = 17.9\%$.}
  \label{fig:sim}
\end{figure}

\section{Discussion}\label{sec:discussion}

A scatter plot of tied data can hide almost all of it, and the analyst usually cannot tell from the plot. The diagnostics make that visible in one number, and geometric tiers give a display whose counts can be read to within a known error whatever the size of the data.

\paragraph{Where the method helps most.} The gain is largest for discrete data with heavy repetition, such as survey items, counts, coded categories and rounded measurements, where every location of a bounded grid can be occupied and the ordinary scatter plot degenerates into a uniform lattice. For large continuous data the sunflower plot complements rather than replaces hexagonal binning: binning gives the quicker overall impression, and petals give readable counts and keep sparse observations visible.

\paragraph{Limitations.} Glyphs need room: with many cells, flowers become small; \code{mpg}, with 78 locations spread over a $27 \times 33$ range of integer values, already needs a large figure, and binned mode limits the number of cells only by coarsening the data. A reader must tell the tiers apart by fill; four or five shades of one sequential hue are distinguishable, which with $b = 5$ covers cells up to 8,750 observations, and beyond that a larger base is needed at the cost of a looser bound. At a tier boundary, a cell just above $P b^{t}$ is drawn with about $P/b$ petals in a darker shade next to a cell just below it with $P$ petals in a lighter one; the order is carried by the shade, but whether readers decode it correctly is an empirical question. Finally, the selection thresholds are reasoned defaults rather than estimates.

\paragraph{Future work.} A perception study in the tradition of \citet{cleveland1984perception} would test whether readers estimate and compare counts more accurately from geometric-tier sunflowers than from point area, colour-binned hexagons or the single-weight rule, and would calibrate the selection thresholds. Ties can also mislead about association: in \code{bfi} the 36 visible dots form a complete grid with no apparent trend, although the 2,757 respondents correlate at $r = -0.34$. Whether tie-aware displays improve judgements of association, not only of density, is a natural extension.

\paragraph{Conclusion.} One petal weight cannot serve cells of 14 and of 483 observations. Letting each cell choose its weight from a geometric ladder bounds the error of every decoded count, reduces to the classic and Dupont--Plummer forms where those suffice, and, with simple tie diagnostics to decide when it is needed, turns the scatter plot of heavily tied data from a picture of where the data \emph{could} be into a picture of where they \emph{are}.

\section*{Software and reproducibility}
\pkg{ggsunflower} 0.1.0 is available at \url{https://github.com/sameeranc/ggsunflower} under the MIT licence. All figures and tables can be reproduced with the scripts \code{figures.R}, \code{simulation.R} and \code{fig8\_table3.R} in the supplementary material, using \textsf{R} 4.3.3, \pkg{ggplot2} 3.4.4, \pkg{psych} 2.4.1 and \pkg{psychTools} 2.4.2.

\section*{Use of generative AI}
The author used Claude, a large language model developed by Anthropic, as a research and programming assistant. It was used in three ways: in discussions that shaped the method, including the geometric-tier rule and its error bound; to write the code for the \pkg{ggsunflower} package, the simulation study and the figures; and to draft and edit the text of this paper. The author set the research question, made the decisions on the method and its presentation, checked all code, results and references, and is responsible for the paper and the package.

\bibliographystyle{plainnat}
\bibliography{references}

@article{carr1987,
  author  = {Carr, D. B. and Littlefield, R. J. and Nicholson, W. L. and Littlefield, J. S.},
  title   = {Scatterplot Matrix Techniques for Large {N}},
  journal = {Journal of the American Statistical Association},
  year    = {1987},
  volume  = {82},
  number  = {398},
  pages   = {424--436},
  doi     = {10.1080/01621459.1987.10478445}
}

@article{cleveland1984faces,
  author  = {Cleveland, William S. and McGill, Robert},
  title   = {The Many Faces of a Scatterplot},
  journal = {Journal of the American Statistical Association},
  year    = {1984},
  volume  = {79},
  number  = {388},
  pages   = {807--822},
  doi     = {10.1080/01621459.1984.10477098}
}

@article{cleveland1984perception,
  author  = {Cleveland, William S. and McGill, Robert},
  title   = {Graphical Perception: Theory, Experimentation, and Application to the Development of Graphical Methods},
  journal = {Journal of the American Statistical Association},
  year    = {1984},
  volume  = {79},
  number  = {387},
  pages   = {531--554},
  doi     = {10.1080/01621459.1984.10478080}
}

@article{dupont2003,
  author  = {Dupont, William D. and Plummer, Jr., W. Dale},
  title   = {Density Distribution Sunflower Plots},
  journal = {Journal of Statistical Software},
  year    = {2003},
  volume  = {8},
  number  = {3},
  doi     = {10.18637/jss.v008.i03}
}

@article{dupont2005,
  author  = {Dupont, William D. and Plummer, Jr., W. Dale},
  title   = {Using Density-Distribution Sunflower Plots to Explore Bivariate Relationships in Dense Data},
  journal = {The Stata Journal},
  year    = {2005},
  volume  = {5},
  number  = {3},
  pages   = {371--384},
  doi     = {10.1177/1536867X0500500305}
}

@article{galton1886,
  author  = {Galton, Francis},
  title   = {Regression Towards Mediocrity in Hereditary Stature},
  journal = {Journal of the Anthropological Institute of Great Britain and Ireland},
  year    = {1886},
  volume  = {15},
  pages   = {246--263},
  doi     = {10.2307/2841583}
}

@book{wickham2016,
  author    = {Wickham, Hadley},
  title     = {ggplot2: Elegant Graphics for Data Analysis},
  edition   = {2},
  publisher = {Springer-Verlag},
  address   = {New York},
  year      = {2016},
  doi       = {10.1007/978-3-319-24277-4}
}

@manual{psych,
  author = {Revelle, William},
  title  = {psych: Procedures for Psychological, Psychometric, and Personality Research},
  note   = {R package version 2.4.1},
  organization = {Northwestern University},
  address = {Evanston, Illinois},
  year   = {2024},
  url    = {https://CRAN.R-project.org/package=psych}
}

@manual{psychtools,
  author = {Revelle, William},
  title  = {psychTools: Tools to Accompany the 'psych' Package for Psychological Research},
  note   = {R package version 2.4.2},
  organization = {Northwestern University},
  address = {Evanston, Illinois},
  year   = {2024},
  url    = {https://CRAN.R-project.org/package=psychTools}
}

@article{azzalini1990,
  author  = {Azzalini, A. and Bowman, A. W.},
  title   = {A Look at Some Data on the {Old Faithful} Geyser},
  journal = {Journal of the Royal Statistical Society. Series C (Applied Statistics)},
  year    = {1990},
  volume  = {39},
  number  = {3},
  pages   = {357--365},
  doi     = {10.2307/2347385}
}

@manual{vayr,
  author = {Coppock, Alexander},
  title  = {vayr: Visualize as You Randomize},
  note   = {R package version 1.1.0},
  year   = {2026},
  url    = {https://acoppock.r-universe.dev/vayr}
}

@manual{rcore,
  author       = {{R Core Team}},
  title        = {R: A Language and Environment for Statistical Computing},
  organization = {R Foundation for Statistical Computing},
  address      = {Vienna, Austria},
  year         = {2024},
  url          = {https://www.R-project.org/}
}

\end{document}